\definecolor{DartmouthGreen}{RGB}{0, 105, 62}
\newtheorem{theorem}{Theorem}[section]
\pgfplotsset{compat=1.13}
 \def\be{\begin{equation}}
 \def\ee{\end{equation}}
 \def\bea{\begin{eqnarray}}
 \def\eea{\end{eqnarray}}
 \def\bean{\begin{eqnarray*}}
 \def\eean{\end{eqnarray*}}
 \def\gsim{\mathrel{\rlap{\lower0.2em\hbox{$\sim$}}\raise0.2em\hbox{$>$}}}
 \def\ksim{\mathrel{\rlap{\lower0.2em\hbox{$\sim$}}\raise0.2em\hbox{$<$}}}
 \def\kg{\mathrel{\rlap{\lower0.25em\hbox{$>$}}\raise0.25em\hbox{$<$}}}
 \newcommand\glnf[1]{$\mathrm{GL}_n\left(\mathbb{F}_2\right)$}
 \newtheorem{lemma}[theorem]{Lemma}
\begin{document}
\title{\vspace{-5.5mm} A Sierpinski Triangle Data Structure for Efficient Array Value Update and Prefix Sum Calculation}

\author{Brent Harrison, Jason Necaise, Andrew Projansky, James D.\ Whitfield}
\affil{Department of Physics and Astronomy, Dartmouth College, Hanover, New Hampshire 03755, USA}

\renewcommand\Affilfont{\itshape\small}
\date{}

\maketitle
\begin{abstract}
    The binary indexed tree, or Fenwick tree, is a data structure that can efficiently update values and calculate prefix sums in an array. It allows both of these operations to be performed in $O(\log_2 N)$ time. Here we present a novel data structure resembling the Sierpinski triangle, which accomplishes these operations with the same memory usage in $O(\log_3 N)$ time instead. We show this order to be optimal by making use of a connection to quantum computing.
\end{abstract}

\section{Introduction: Fenwick Trees}

The Fenwick tree~\cite{fenwick_new_1994, fenwick_new_1996} is a binary tree data structure often used for storing frequencies and manipulating cumulative frequency tables. Interestingly, it has also seen applications in the quantum simulation of fermionic systems~\cite{havlicek_operator_2017}. In general, it is useful for applications that require storing and dynamically updating data in an array, as well as efficiently calculating prefix sums. The ``prefix sum'' or ``inclusive scan'' operation on a sequence of bits $n_0, n_1, n_2, \dots$ outputs a second sequence $p_0, p_1, p_2, \dots$ where
\begin{equation}
p_j \equiv \bigoplus_{i<j} n_i
\end{equation}
is the binary sum of the first $j+1$ bits. A na{\"i}ve array implementation allows data updates in $O(1)$ time, but requires $O(N)$ time for prefix sum calculation. The Fenwick tree has the advantage of performing both operations in $O(\log_2 N)$ time. This tree is constructed by the following algorithm,
$\text{}$\newline \\
define {\bf{fenwick}}$(S, E)$:\\
$\text{} \hspace{7mm} \text{if}$ $S \neq E$:\\
$\text{} \hspace{14mm}$        connect $E$ to $\lfloor\frac{E+S}{2}\rfloor$;\\
$\text{} \hspace{14mm}$        {\bf fenwick}$\left(S, \lfloor\frac{E+S}{2}\rfloor\right)$;\\
$\text{} \hspace{14mm}$        {\bf fenwick}$\left(\lfloor\frac{E+S}{2}\rfloor,E\right)$;\\
$\text{} \hspace{7mm}$    else:\\
$\text{} \hspace{14mm}$        return;
\newline \\
where the function {\bf fenwick}$(0,\ N-1)$ generates a Fenwick tree with $N$ nodes. See Fig.~\ref{fig:fenwickTreeN7} for a seven-node example. The tree represents the data structure as follows.

\begin{figure}
    \centering
    \includegraphics[width=0.9\linewidth]{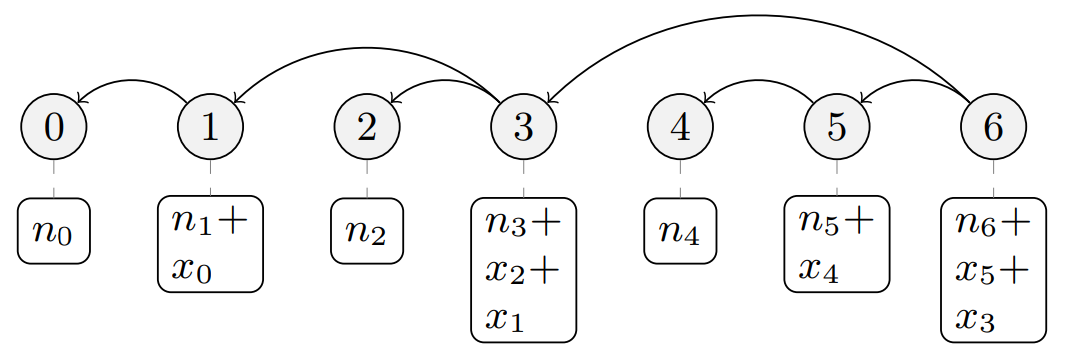}
    \caption{A data structure based on a seven-node Fenwick tree. Source:~\cite{havlicek_operator_2017}}
    \label{fig:fenwickTreeN7}
\end{figure}

We begin with a binary array of $N$ elements $n_j$. We encode this information in a new array of $N$ elements $x_j$, defined recursively as

\begin{equation}\label{eq:nx}
x_j = n_j + \sum_{k \in F(j)}x_k,
\end{equation}

where $F(j)$ is the set of children of node $j$ in the tree. It is easily seen that updating an element $n_j$ of the original array requires multiple bitflips in the encoded array - we must update $x_j$ as well as all its ancestors on the tree. However, calculating prefix sums is simplified. In our seven-node example, in order to calculate $p_6$, we need only query bit 6 and its children on the tree, i.e. $p_6 = x_6 + x_5 + x_3$. As the Fenwick tree is well-known~\cite{fenwick_new_1994, fenwick_new_1996}, we will not prove that these operations are $O(\log_2 N)$ here.

We will now describe a novel data structure similar to the Fenwick tree, but with a structure resembling the Sierpinski triangle, which accomplishes the array update and prefix sum operations in $O(\log_3 N)$ time. We will call this structure the Sierpinski tree.

\section{Construction of the Sierpinski Tree}\label{sec:construction}

We define our data structure in terms of a directed tree with $N$ nodes. We will present an algorithm for the construction of this tree when $N$ is a power of 3, which we will call a ``full'' tree. In order to construct the tree for other values of $N$, follow the procedure to construct a ``full'' tree\footnote{In general, the result will be a forest, rather than a single tree. In an abuse of nomenclature, we will nonetheless refer to it as a tree throughout.} with $3^{\lceil \log_3(N) \rceil}$ nodes, and then delete the nodes with indices $i \geq N$. The algorithm follows:
$\text{}$\newline \\
define {\bf{sierpinski}}$(S, E)$:\\
$\text{} \hspace{7mm} \text{if}$ $S \neq E$: {\color{teal} // start $\neq$ end }\\
$\text{} \hspace{14mm}$ L = $S + \frac{1}{2}(\frac{E-S+1}{3}-1)$; {\color{teal} // ``left'' point, midpoint of first third of $[S,E]$)}\\
$\text{} \hspace{14mm}$ C = $\frac{S+E}{2}$; {\color{teal} // ``center'' point, midpoint of second third of $[S,E]$}\\
$\text{} \hspace{14mm}$ R = $E - (L-S)$; {\color{teal} // ``right'' point, midpoint of final third of $[S,E]$}\\
$\text{} \hspace{14mm}$        connect $C$ to $L$;\\
$\text{} \hspace{14mm}$        connect $C$ to $R$;\\
$\text{} \hspace{14mm}$        {\color{teal}// Divide interval into thirds, apply function to each third}\\
$\text{} \hspace{14mm}$ T = $2(L-S)$;\\
$\text{} \hspace{14mm}$        {\bf sierpinski}$(S,\ S + T)$;\\
$\text{} \hspace{14mm}$        {\bf sierpinski}$(S + T + 1,\ S + 2T + 1)$;\\
$\text{} \hspace{14mm}$        {\bf sierpinski}$(S + 2T + 2,\ E)$;\\
$\text{} \hspace{7mm}$    else:\\
$\text{} \hspace{14mm}$        return;

Here the function {\bf sierpinski}$(0,3^k - 1)$ will create a Sierpinski tree with $N = 3^k$ nodes. Figs.~\ref{fig:n9} and \ref{fig:n27} show the tree for $N = 9$ and $N = 27$ respectively. 


\begin{figure}[h]
  \centering
  \scalebox{0.67}{
    \begin{tikzpicture}
      \draw
        (0, 0) node[circle,draw=black,fill=black!4,minimum size=1.05cm, label=center:\Large 0](0){}
        (1, 2) node[circle,draw=black,fill=black!4,minimum size=1.05cm, label=center:\Large 1](1){}
        (2, 0) node[circle,draw=black,fill=black!4,minimum size=1.05cm, label=center:\Large 2](2){}
        (2, 3) node[circle,draw=black,fill=black!4,minimum size=1.05cm, label=center:\Large 3](3){}
        (3, 5) node[circle,draw=black,fill=black!4,minimum size=1.05cm, label=center:\Large 4](4){}
        (4, 3) node[circle,draw=black,fill=black!4,minimum size=1.05cm, label=center:\Large 5](5){}
        (4, 0) node[circle,draw=black,fill=black!4,minimum size=1.05cm, label=center:\Large 6](6){}
        (5, 2) node[circle,draw=black,fill=black!4,minimum size=1.05cm, label=center:\Large 7](7){}
        (6, 0) node[circle,draw=black,fill=black!4,minimum size=1.05cm, label=center:\Large 8](8){};
      \begin{scope}[->]
        \draw (1) to (0);
        \draw (1) to (2);
        \draw (4) to[bend right](1);
        \draw (4) to[bend left](7);
        \draw (4) to (3);
        \draw (4) to (5);
        \draw (7) to (6);
        \draw (7) to (8);
      \end{scope}
    \end{tikzpicture}}
    \caption{The Sierpinski tree for $N = 9$}
    \label{fig:n9}
\end{figure}
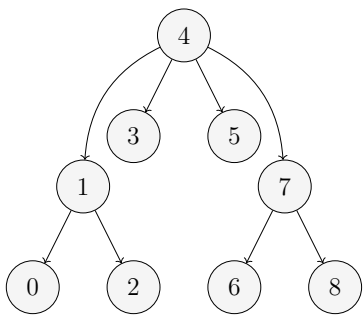


\begin{figure}[h]
  \centering
  \begin{tikzpicture}[scale=0.65]
      \draw
        (0, 0) node[circle,draw=black,fill=black!4,minimum size=0.7cm,label=center:0] (0){}
        (1, 2) node[circle,draw=black,fill=black!4,minimum size=0.7cm,label=center:1] (1){}
        (2, 0) node[circle,draw=black,fill=black!4,minimum size=0.7cm,label=center:2] (2){}
        (2, 3) node[circle,draw=black,fill=black!4,minimum size=0.7cm,label=center:3] (3){}
        (3, 5) node[circle,draw=black,fill=black!4,minimum size=0.7cm,label=center:4] (4){}
        (4, 3) node[circle,draw=black,fill=black!4,minimum size=0.7cm,label=center:5] (5){}
        (4, 0) node[circle,draw=black,fill=black!4,minimum size=0.7cm,label=center:6] (6){}
        (5, 2) node[circle,draw=black,fill=black!4,minimum size=0.7cm,label=center:7] (7){}
        (6, 0) node[circle,draw=black,fill=black!4,minimum size=0.7cm,label=center:8] (8){}
        (4, 7) node[circle,draw=black,fill=black!4,minimum size=0.7cm,label=center:9] (9){}
        (5, 9) node[circle,draw=black,fill=black!4,minimum size=0.7cm,label=center:10] (10){}
        (6, 7) node[circle,draw=black,fill=black!4,minimum size=0.7cm,label=center:11] (11){}
        (6, 10) node[circle,draw=black,fill=black!4,minimum size=0.7cm,label=center:12] (12){}
        (7, 12) node[circle,draw=black,fill=black!4,minimum size=0.7cm,label=center:13] (13){}
        (8, 10) node[circle,draw=black,fill=black!4,minimum size=0.7cm,label=center:14] (14){}
        (8, 7) node[circle,draw=black,fill=black!4,minimum size=0.7cm,label=center:15] (15){}
        (9, 9) node[circle,draw=black,fill=black!4,minimum size=0.7cm,label=center:16] (16){}
        (10, 7) node[circle,draw=black,fill=black!4,minimum size=0.7cm,label=center:17] (17){}
        (8, 0) node[circle,draw=black,fill=black!4,minimum size=0.7cm,label=center:18] (18){}
        (9, 2) node[circle,draw=black,fill=black!4,minimum size=0.7cm,label=center:19] (19){}
        (10, 0) node[circle,draw=black,fill=black!4,minimum size=0.7cm,label=center:20] (20){}
        (10, 3) node[circle,draw=black,fill=black!4,minimum size=0.7cm,label=center:21] (21){}
        (11, 5) node[circle,draw=black,fill=black!4,minimum size=0.7cm,label=center:22] (22){}
        (12, 3) node[circle,draw=black,fill=black!4,minimum size=0.7cm,label=center:23] (23){}
        (12, 0) node[circle,draw=black,fill=black!4,minimum size=0.7cm,label=center:24] (24){}
        (13, 2) node[circle,draw=black,fill=black!4,minimum size=0.7cm,label=center:25] (25){}
        (14, 0) node[circle,draw=black,fill=black!4,minimum size=0.7cm,label=center:26] (26){};
      \begin{scope}[->]
        \draw (1) to (0);
        \draw (1) to (2);
        \draw (4) to[bend right] (1);
        \draw (4) to[bend left] (7);
        \draw (4) to (3);
        \draw (4) to (5);
        \draw (7) to (6);
        \draw (7) to (8);
        \draw (10) to (9);
        \draw (10) to (11);
        \draw (13) to[out=197.5, in=90] (4);
        \draw (13) to[out=342.5, in=90] (22);
        \draw (13) to[bend right] (10);
        \draw (13) to[bend left] (16);
        \draw (13) to (12);
        \draw (13) to (14);
        \draw (16) to (15);
        \draw (16) to (17);
        \draw (19) to (18);
        \draw (19) to (20);
        \draw (22) to[bend right] (19);
        \draw (22) to[bend left] (25);
        \draw (22) to (21);
        \draw (22) to (23);
        \draw (25) to (24);
        \draw (25) to (26);
      \end{scope}
    \end{tikzpicture}
    \caption{The Sierpinski tree for $N = 27$}
    \label{fig:n27}
\end{figure}
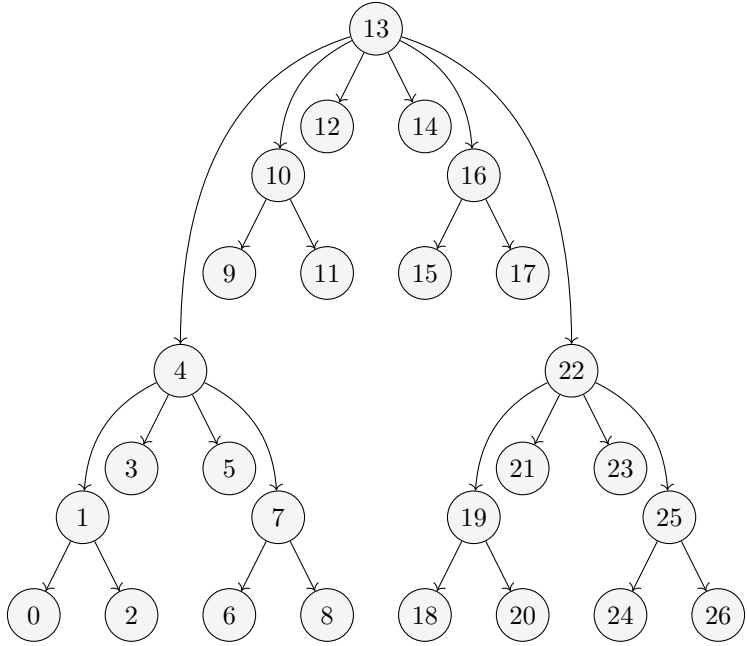

The tree represents the data stored analogously to the Fenwick tree, with a binary array of $N$ elements encoded in a new array of $N$ elements $x_j$, defined recursively as in Eq.~\ref{eq:nx},

\begin{equation}
x_j = n_j + \sum_{k \in F(j)}x_k,
\end{equation}

where $F(j)$ is the set of children of node $j$ on the tree. We illustrate this for $N = 9$ in Fig.~\ref{fig:sierpinski_nx}.

\section{Array Update and Prefix Sum Complexity}

In this section, we will show that the time complexity of the array update and prefix sum operations is $O(\log_3 N)$. We begin with some definitions.

Let an $N$-node Sierpinski tree with least index $j_0$ be defined as $T(j_0, N) \equiv \{V(j_0, N),\ E(j_0, N)\}$, where the set of vertices $V(j_0, N) = \{j_0, j_0 + 1,\dots,j_0 + N-1\}$, and the set of directed edges $E(j_0, N)$ is the corresponding set of ordered pairs defined by the algorithm in Sec.~\ref{sec:construction}. For brevity, we also define $T(N) = \{ V(N), E(N)\} \equiv T(0, N)$. 

It will be useful to discuss the left, central and right subtrees. For a full tree $T(3^k)$, we respectively define them as
\begin{equation}
\begin{aligned}
T_L(3^k) &\equiv \{V_L(3^k), E_L(3^k)\} = T\left(0,\ 3^{k-1}\right),\\
T_C(3^k) &\equiv \{V_C(3^k), E_C(3^k)\} = T\left(3^{k-1},\ 3^{k-1}\right),\\
T_R(3^k) &\equiv \{V_R(3^k), E_R(3^k)\} = T\left(2\cdot 3^{k-1}, 3^{k-1}\right).
\end{aligned}
\end{equation}
whose respective roots have indices
\begin{equation}
\begin{aligned}
L &= \frac{1}{2}(3^{k-1} - 1),\\
C &= \frac{1}{2}(3^{k} - 1),\\
R &= 2C - L.
\end{aligned}
\end{equation}
For an unfull tree with $N$ nodes $T(N)$, we can define the subtrees by constructing the smallest full tree with $3^k > N$ nodes and then deleting nodes with indices $i \geq N$ from the subtrees. 

\begin{figure}
    \centering    \includegraphics[width=0.9\linewidth]{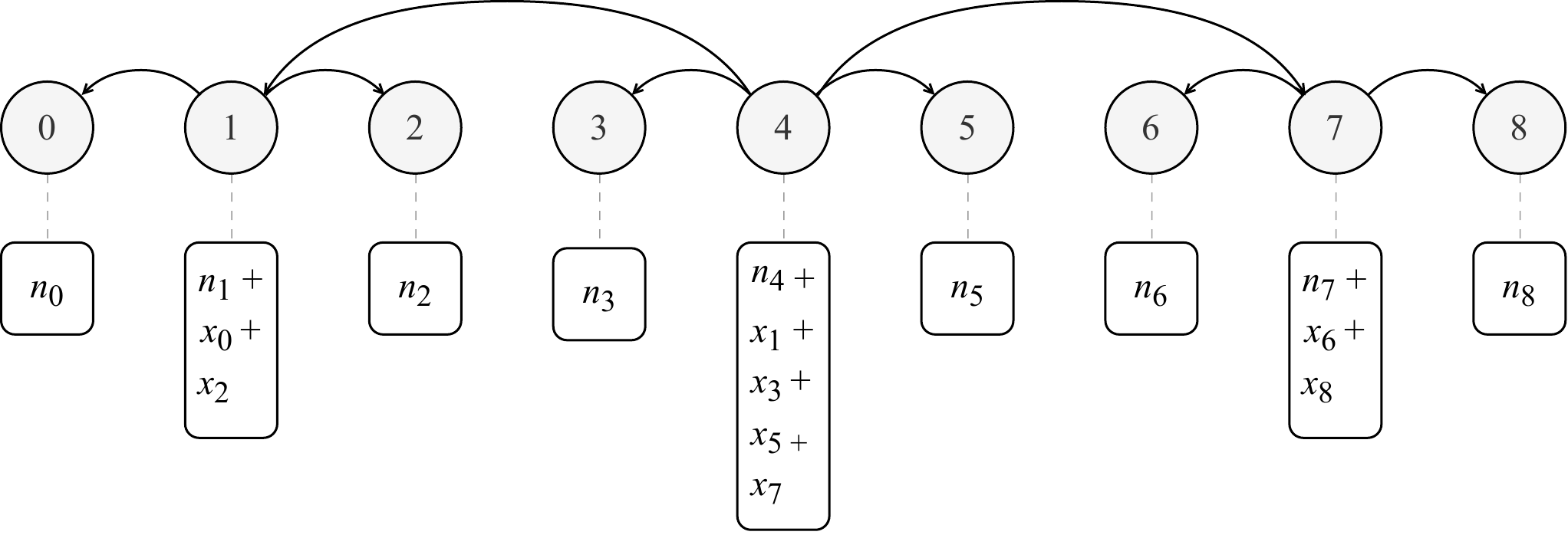}
    \caption{Diagram showing the partial sums stored in the Sierpinski tree for $N = 9$.}
    \label{fig:sierpinski_nx}
\end{figure}

Now, consider the data structure defined by the $N$-node tree $T(N)$. For a bit corresponding to the node $j \in V(N)$, let the set of bits flipped by the array update operation be $B_N(j)$, and the set of bits summed for the prefix sum operation be $P_N(j)$. We want to show that these sets are $O(\log_3 N)$ in size. We will do this by showing that the size of their union is $O(\log_3 N)$. To this end we define the \emph{weight} of each node in the tree to be 
\begin{equation}\label{eq:weight}
w_N(j) \equiv \abs{B_{N}(j) \cup P_{N}(j)}.
\end{equation}
We will need the following lemma.

\begin{lemma}\label{pf:lemma}
$w_N(j)$ is an increasing function of $N$.
\end{lemma}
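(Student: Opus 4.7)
The natural strategy is to prove the stronger statement that $B_{N}(j)\cup P_{N}(j) \subseteq B_{N+1}(j)\cup P_{N+1}(j)$ for every $j<N$; taking cardinalities and iterating then gives monotonicity of $w_N(j)$ in $N$. I would split the argument into the two inclusions below.

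\textbf{Step 1: the update set grows.} By construction, $T(N)$ and $T(N+1)$ are both obtained from the same full tree $T(3^{K})$ (for $K=\lceil\log_3(N+1)\rceil$) by deleting all vertices of index $\geq N$ and $\geq N+1$ respectively. In particular, $T(N)$ is an induced subdigraph of $T(N+1)$: no edge is ever removed when we pass from $T(N)$ to $T(N+1)$. Since $B_N(j)$ is exactly the set of nodes on the directed ancestor path from $j$ to a root of the tree containing $j$, and every such path in $T(N)$ is still a directed path in $T(N+1)$, we get $B_N(j)\subseteq B_{N+1}(j)$ for free.

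\textbf{Step 2: the prefix-sum set is absorbed into $B_{N+1}(j)\cup P_{N+1}(j)$.} This is where the real work lies. Let $k\in P_N(j)$. The identity $x_k^{(N+1)}=x_k^{(N)}+\sum_{i\in S}n_i$, where $S$ is the (possibly empty) set of new descendants of $k$ lying in $\{N\}$, tells us how the stored sums transform. If $S=\emptyset$ then $k$ still represents the same $\mathbb{F}_2$-sum of original bits, so the same decomposition of $p_j$ works in $T(N+1)$ and $k\in P_{N+1}(j)$. Otherwise $N$ enters the subtree of $k$, and the algorithm may reroute around $k$. In that situation I would argue, by cases on the three possible roles the new vertex $N$ can play in the recursive construction (becoming a new $L$, $C$, or $R$ of some sub-call), that any node expelled from $P_N(j)$ either reappears in $P_{N+1}(j)$ or lies on the new ancestor chain of $j$, and hence sits inside $B_{N+1}(j)$. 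Combining this with Step~1 yields the desired union inclusion.

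\textbf{Main obstacle.} The only delicate point is Step~2: one has to convert ``the prefix-sum decomposition changes'' into a statement about individual vertices. The cleanest organization is probably a structural induction on the recursive decomposition of Section~\ref{sec:construction}, keyed on which sub-interval $(S,E)$ first sees a changed vertex at depth of recursion; this lets one exploit the fact that the added vertex $N$ can only affect the prefix-sum computation through a localized chunk of the tree, and any affected old summand has its contribution captured by the new ancestor it sprouts below. Once this case analysis is written out for all three sub-interval positions of $j$, Steps~1 and~2 combine immediately to give $w_N(j)\leq w_{N+1}(j)$.
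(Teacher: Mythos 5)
Your Step 1 is correct and cleanly argued: since $T(N)$ is the induced subgraph of the full tree on $\{0,\dots,N-1\}$, the ancestor chain of $j$ in $T(N)$ is an initial segment of its ancestor chain in the full tree (terminating at the first ancestor of index $\geq N$), and this segment can only lengthen as $N$ grows, so $B_N(j)\subseteq B_{N+1}(j)$. The problem is that Step 2 --- which you yourself identify as ``where the real work lies'' --- is a plan rather than a proof, and it is precisely the content of the lemma. The claim $P_N(j)\subseteq B_{N+1}(j)\cup P_{N+1}(j)$ is strictly stronger than what the lemma needs, and you never establish it: the case analysis ``on the three possible roles the new vertex $N$ can play'' is announced but not carried out, and the key assertion that ``any node expelled from $P_N(j)$ either reappears in $P_{N+1}(j)$ or lies on the new ancestor chain of $j$'' is exactly the statement that requires justification. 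There is a further difficulty you do not address: $P_N(j)$ is only well-defined relative to a specific prefix-sum algorithm (many subsets of nodes can sum to $p_j$ over $\mathbb{F}_2$), so an element-by-element inclusion cannot even be verified until you pin down which decomposition the algorithm produces after vertex $N$ is inserted and the stored values $x_k$ of its ancestors change. As written, the argument would be circular or vacuous at the one point where it matters.

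It is worth contrasting this with the route the paper actually takes, which is engineered to avoid exactly this obstacle. The paper never tracks individual elements of $P_N(j)$; instead it inducts on the ternary decomposition $T_L$, $T_C$, $T_R$ of a full tree, removing nodes in \emph{descending} order (the time-reversal of your insertion process), and only compares the cardinalities $\abs{\left(B_N(j)\cup P_N(j)\right)\cap V_X}$ restricted to each subtree. The cross-subtree interactions are then handled by two coarse observations --- deleting nodes of $T_R$ cannot add ancestors to nodes of $T_L\cup T_C$ and can only shorten their prefix-sum decompositions through the root $C$; similarly for $T_C$ versus $T_L$ --- while the within-subtree behaviour is delegated to the inductive hypothesis. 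Your stronger set-theoretic inclusion may well be true (it holds in small examples), and proving it would be a nice refinement, but to complete your proof you would need to (i) fix the prefix-sum algorithm explicitly and (ii) actually execute the three-way case analysis for the inserted vertex, including the awkward case where the new vertex becomes an ancestor of nodes already present (e.g.\ inserting the global root $C$), which rewires the decomposition of every $p_j$ with $j$ in its newly acquired subtrees. Until that is done, the proposal has a genuine gap at its central step.
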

\begin{proof}
It is sufficient to show that removing nodes from a Sierpinski tree in descending order does not increase the cost of the array update and prefix sum operations associated with any of the remaining nodes. Without loss of generality, we may assume we start from some full tree $T(3^k)$, where $k \in \mathbb{N}$.

Let the left, central and right subtrees of $T(3^k)$ be $T_L(3^k)$, $T_C(3^k)$ and $T_R(3^k)$, with roots $L$, $C$ and $R$ respectively. We will now consider what happens when we remove the nodes of the tree in descending order.

It is easily seen from the structure of the tree that deleting nodes in $T_R$ does not affect the array update operation for nodes in $T_L$ and $T_C$, which have no ancestors in $T_R$. Similarly, the only effect on the prefix sum operation for nodes in $T_L$ and $T_C$ is to possibly reduce the number of bits summed if the connection to the root $C$ of $T_C$ is removed.

Let us suppose we have deleted $T_R$ entirely. Now deleting nodes in $T_C$ does not increase the array update or prefix sum cost for nodes in $T_L$ (only reducing the former if $C$ is deleted). If it is the case that removing nodes from a subtree likewise does not increase the costs for other nodes in that subtree, then we are done.

We are now able to prove the lemma by induction. For a tree of $N \leq 3$ nodes, by inspection, removing nodes from the tree in descending order does not increase costs for other nodes in the tree. Now suppose that there exists some integer $M \geq 1$ such that removing nodes in descending order from $T(3^{M})$ does not increase operation costs for other nodes in the tree. Then for a tree $T(3^{M+1})$, we have the following:
\begin{enumerate}
\item Removing nodes in descending order from $T_R(3^{M+1})$ does not increase costs for nodes in $T_L(3^{M+1})$ and $T_C(3^{M+1})$.
\item Removing nodes in descending order from $T_C(3^{M+1})$ does not increase costs for nodes in $T_L(3^{M+1})$.
\item By the inductive hypothesis, removing nodes in descending order from any subtree $T_X$ does not increase costs for nodes in that subtree. More precisely, the size of $\left(B_N(j) \cup P_N(j)\right) \cap V_X$ does not increase.
\end{enumerate}
Altogether, we have that removing nodes from $T(3^{n+1})$ in descending order does not increase costs for any of the remaining nodes. This completes the proof by induction.
\end{proof}
We are now ready to prove our main result.

\begin{theorem}
For an $N$-node Sierpinski tree $T(N)$, $\forall j \in V(N)$, $\omega_N(j) \leq \lceil \log_3 N \rceil + 1$, with equality when $N$ is a power of 3.
\end{theorem}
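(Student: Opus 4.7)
The plan is to combine Lemma~\ref{pf:lemma} with an induction on $k$ that proves the stronger equality $w_{3^k}(j) = k+1$ for every node of a full tree $T(3^k)$. Setting $k := \lceil \log_3 N \rceil$, we have $N \leq 3^k$, so Lemma~\ref{pf:lemma} yields $w_N(j) \leq w_{3^k}(j)$; establishing the equality for full trees therefore gives the upper bound $\lceil \log_3 N \rceil + 1$ in general, with equality whenever $N$ is a power of $3$.

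For the induction, the base case $k = 0$ is immediate (a single node has $w = 1$). Assume the claim for $T(3^k)$ and consider $T(3^{k+1})$ with its three subtrees $T_L, T_C, T_R$ of size $3^k$, rooted at $L, C, R$. Two structural facts drive the argument: (i) the only edges of $T(3^{k+1})$ beyond those internal to the three subtrees are $C \to L$ and $C \to R$, so for every $i \neq C$ the stored value $x_i$ equals the value $i$ would store inside an isolated copy of its subtree; and (ii) at the extra root, $x_C = x_C^{T_C} \oplus x_L \oplus x_R$, where $x_C^{T_C}$ denotes the value $C$ would store inside an isolated $T_C$.

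I would then split by which subtree contains $j$. For $j \in V_L$, the prefix-sum set is unchanged from $T_L$ and $B(j)$ picks up the single new ancestor $C$. For $j \in V_C$, the prefix sum is $p_j = x_L \oplus p_j^{T_C}$, and substituting (ii) whenever $x_C^{T_C}$ appears in the $T_C$-formula causes $x_L$ to cancel precisely when $j \geq C$, leaving $B(j) \cup P(j)$ equal to $B^{T_C}(j) \cup P^{T_C}(j)$ plus one new node ($L$ if $j < C$, $R$ if $j \geq C$). For $j \in V_R$, $p_j = x_L \oplus x_C^{T_C} \oplus p_j^{T_R}$, and the same substitution yields $p_j = x_C \oplus x_R \oplus p_j^{T_R}$; because $R$ is an ancestor of every node of $V_R$, it already lies in $B^{T_R}(j)$, so only the single node $C$ is added to the union. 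By the inductive hypothesis $|B^{T_X}(j) \cup P^{T_X}(j)| = k+1$ for the appropriate subtree $T_X$, so each case yields $w_{3^{k+1}}(j) = (k+1) + 1 = k+2$, closing the induction.

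The main obstacle will be the bookkeeping for $j \in V_C$ with $j \geq C$ and for $j \in V_R$ with $j \geq R$. In those cases $x_C^{T_C}$ genuinely appears in the inductive prefix-sum formula, and after substitution one must verify that the cancellations of $x_L$ or $x_R$ leave \emph{exactly} one node outside the relevant subtree---not zero, not two. This hinges on a short side-observation that the root of any full Sierpinski subtree $T(3^m)$ lies in its own prefix-sum set $P^{T(3^m)}(j')$ precisely when $j'$ sits at or beyond that root, which follows by a one-line inspection of the recursive structure of $P$.
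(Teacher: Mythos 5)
Your proposal is correct and follows essentially the same route as the paper: the monotonicity lemma handles non-powers of $3$, and the equality $w_{3^k}(j)=k+1$ for full trees is proved by induction via the same $T_L, T_C, T_R$ decomposition with the same case analysis (one new node $C$, $L$, or $R$ entering $B(j)\cup P(j)$ in each case). Your explicit substitution $x_C^{T_C} = x_C \oplus x_L \oplus x_R$ and the attendant cancellation bookkeeping simply make rigorous the verbal justifications the paper gives for the $V_C$ and $V_R$ cases.
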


\begin{proof}
We begin by showing that for a full tree, $w_N(j) = \log_3 N + 1$ for all $j$. We will do this by induction.

By inspection, $w_3(j) = \log_3 3 + 1 = 2$ for all $j$.

Now suppose that there exists some integer $k \geq 1$ for which 
\begin{equation}\label{eq:inductiveHypothesis}
w_{3^k}(j) = k + 1
\end{equation}
for all nodes $j$. We will show that this implies 
\begin{equation}\label{eq:inductiveStep}
w_{3^{k+1}}(j) = k + 2
\end{equation}
for all nodes $j$. To this end, consider the tree with $3^{k+1}$ nodes, $T(3^{k+1}$). We divide this into left, center, and right subtrees,
\begin{equation}
\begin{aligned}
T_L &= T\left(0,\ 3^{k}\right) \equiv \{V_L, E_L\},\\
T_C &= T\left(3^k, 3^k\right) \equiv \{V_C, E_C\},\\
T_R &= T\left(2\cdot 3^k,\ 3^{k}\right) \equiv \{V_R, E_R\},
\end{aligned}
\end{equation}
with roots $L$, $C$ and $R$, respectively.

By the inductive hypothesis \eqref{eq:inductiveHypothesis}, for each subtree $T_X$, we have $w_{3^k}(j) = k + 1$ for all $j$ if we treat the subtree ``independently''. More formally,
\begin{equation}
\forall j \in V_X(3^{k+1}),\ \abs{\left(B_{3^{k+1}(j)}\cup P_{3^{k+1}}(j)\right)\cap V_X(3^{k+1})} = k + 1.
\end{equation}
If we connect them into a single tree, the array update cost for each node in $T_L$ increases by 1, as each update now also flips bit $C$. The prefix sum cost for $T_L$ is unchanged, as none of the nodes in $T_C$ or $T_R$ have indices less than those of the nodes in $T_L$. Altogether, for nodes $j \in V_L$, $w_{3^{k+1}}(j) = k + 2$.

The array update and prefix sum costs for nodes in tree $T_R$ each increase by 1, as node $C$ must now be flipped when any bit in $T_R$ is flipped, and must likewise be counted in all prefix sums. As the value of bit $R$ has been added to bit $C$, bit $R$ must also be counted in prefix sums. However, bit $R$ was already involved in all value updates. Thus, the bit $C$ is the only new bit that matters, and $w_{3^{k+1}}(j) = k + 2$ for all $j$ in $V_R$.

For $T_C$, the value update operation is unchanged. For the prefix sum operation, we must treat nodes with indices $j \leq C$ and indices $j > C$ separately. For nodes with indices $j \leq C$, the prefix sum operation must now count bit $L$. For nodes with indices $j > C$, the prefix sum operation must count $R$, as the value of node $R$ has been added to node $C$. Altogether, $w_{3^{k+1}}(j) = k + 2$ for all $j$ in $V_C$.

Overall, we have shown that \eqref{eq:inductiveHypothesis} implies \eqref{eq:inductiveStep}, and hence \eqref{eq:inductiveHypothesis} holds for all $k \in \mathbb{N}$ by induction. 

When $N$ is not a power of 3, i.e.
\begin{equation}\label{eq:logk}
\begin{aligned}
&3^k < N < 3^{k+1}\\
\iff &k < \log_3 N < k + 1,
\end{aligned}
\end{equation}
we can show that since $w_{N}(j)$ is an increasing function of $N$ (Lemma \ref{pf:lemma}), $\forall j \in V(N) \subset V(3^{k+1})$, 
\begin{equation}
\begin{aligned}
&w_{N}(j) \leq w_{3^{k+1}}(j)\\
\implies &w_{N}(j) \leq k + 2\\
\implies &w_{N}(j) \leq \lceil \log_3 N\rceil + 1,
\end{aligned}
\end{equation}
where in the last line we have used \eqref{eq:logk}. This completes the proof.
\end{proof}

\section{Discussion}

We have presented the Sierpinski tree, a novel data structure similar to the Fenwick tree, but with strictly better performance. We will now very briefly discuss the application of this tree to quantum computing, and comment on its near-optimality.

The Fenwick tree can be used in the context of quantum computation to define a fermion-to-qubit transform~\cite{havlicek_operator_2017}. Without going into the details, this amounts to a prescription for associating with each node of the tree two tensor products of Pauli matrices (``Pauli strings''). The ``Pauli weight'' of each of these strings is defined as the number of non-identity Pauli matrices in the string. For the purpose of quantum simulation, it is desirable to minimize this quantity~\cite{havlicek_operator_2017,bravyi_fermionic_2002}.

In \cite{havlicek_operator_2017}, Havlicek et al.\ show that the worst-case Pauli weight of the Pauli strings associated with the $j$th node of an $N$-node Fenwick tree is, using our terminology (Eq.~\ref{eq:weight}), that node's weight $w_N(j)$. For a Fenwick tree, this weight is $O(\log_2 N)$. It is similarly possible to define a fermion-to-qubit transform associated with the Sierpinski tree, for which the worst-case Pauli weight of each Pauli string associated with the $j$th node would be $w_N(j) \leq \lceil \log_3 N \rceil + 1$.

Interestingly, in \cite{jiang_optimal_2020}, Jiang et al.\ show that the minimum average Pauli weight for a fermion-to-qubit transform of this type is $\log_3 (2N) = \log_3 N + \log_3 2$. This implies in turn that the best possible Fenwick tree-like structure would have average weight $\overline{w}_N$ bounded by
\begin{equation}
\log_3 N + \log_3 2 \leq \overline{w}_N \leq \lceil \log_3 N \rceil  + 1 .
\end{equation}
The Sierpinski tree is therefore very nearly optimal by this metric, though there is room to improve on the version presented here. For example, removing the edge between nodes 13 and 22 in Fig.~\ref{fig:n27} reduces the weight of nodes 14 through 17 by one without increasing any other weights. More generally, one can iterate over the edges of the tree, deleting them if doing so would reduce the average weight of the nodes, and obtain a slight improvement.

In a companion paper, the authors will fully describe the fermion-to-qubit transform defined by the Sierpinski tree.

\section*
{Acknowledgements}
The authors thank Peter Winkler, Ojas Parekh, and Joseph Gibson for helpful discussions. The authors were supported by the US NSF grant PHYS-1820747. Additional support came from the NSF (EPSCoR-1921199) and from the Office of Science, Office of Advanced Scientific Computing Research under program Fundamental Algorithmic Research for Quantum Computing. This paper was also supported by the ``Quantum Chemistry for Quantum Computers'' project sponsored by the DOE, Award DE- SC0019374. JDW holds concurrent appointments at Dartmouth College and as an Amazon Visiting Academic. This paper describes work performed at Dartmouth College and is not associated with Amazon.

\bibliographystyle{unsrt}
\bibliography{citations}

\end{document}